\newtheorem{definition}{Definition}[section]
\newtheorem{theorem}{Theorem}[section]
\newtheorem{lemma}{Lemma}[section]
\numberwithin{equation}{section}
\lstdefinestyle{numbers} {numbers=left, stepnumber=1, numberstyle=\tiny, numbersep=10pt}
\lstdefinestyle{MyFrame}{backgroundcolor=\color{lightgray},frame=shadowbox}
\lstdefinestyle{MyPerlStyle} {language=Perl,style=numbers,style=MyFrame,frame=lines , basicstyle=\footnotesize\ttfamily}
\lstdefinestyle{MyJavaStyle} {language=Java,style=numbers,style=MyFrame,frame=lines , basicstyle=\footnotesize\ttfamily}
\lstdefinestyle{MyPythonStyle} {language=Python,style=numbers,style=MyFrame,frame=lines , basicstyle=\footnotesize\ttfamily}
\title[Randomness from Chaos]{A Pseudo Random Number Generator From Chaos}
\author{Nabarun Mondal}
\address{D.E.Shaw \& Co. India, Hyderabad }
\email{mondal@deshaw.com}
\thanks{Dedicated to my late professor Dr. Prashanta Kumar Nandi. \\ 
Dedicated to my parents without their presence we are nothing. \\
Big thanks to: Abhishek Chanda,Gurpreet Singh (Manager),Waseem Fazal,Sahiba
Khurana, Jaipal Reddy : You all have been constant support. \\
In Memory of : Dhrubajyoti Ghosh, my closest human being. RIP 
}
\author{Partha P. Ghosh}
\address{Microsoft India, Hyderabad }
\email{parthag@microsoft.com}
\subjclass[2010]{Primary 11K45 ; 65P20 ;  Secondary 03D10 }  
\begin{document}

\keywords{
chaos ; random number generation ;  diehard ; NIST 
}

\begin{abstract}
A random number generator is proposed based on a 
theorem about existence of chaos in fixed point iteration of $x_n =  cot^2 ( x_{n-1} )$. 
Digital computer simulation of this function iteration exhibits random behavior.
A method is proposed to extract random bytes from this simulation.  
Diehard and NIST test suite for randomness detection is run on this bytes, 
and it is found to pass all the tests in the suite. 
Thus, this method qualifies even for cryptographic quality random number generation.  
\end{abstract}

\maketitle

\begin{section}{Introduction}\label{intro}

Fast generation of high quality random numbers using simple arithmetic methods are elusive.
The most elusive aspect about generating random numbers is that 
one can never be sure, if the produced numbers are \emph{really random}. 
That is exactly why a background theory is needed for it, and a set of \emph{standardized} statistical tests
must \emph{certify the numbers as random}.

Robert R. Coveyou suggested : \emph{``The generation of random numbers is too important to be left to chance.''}. 
Almost same theme is reiterated in the immortal words of Knuth \cite{DK} 
\emph { ``one should not expect arbitrary algorithms to produce random numbers, a theory should be involved'' }.
The great Von Neumann asserted randomness lies beyond the realm of arithmetic: 
\emph{``Any one who considers arithmetical methods of producing random digits is, of course, in a state of sin.''}   
As it turned out to be, Von Neumann might not had been exactly correct, and that is what the authors want to present in the current paper.

Until recently it was assumed that randomness comes out of complexity. 
However, it can also be an effect of chaotic dynamics \cite{jj}.
Properties of a chaotic system would be discussed in the section \ref{theory}. 
Many people misleadingly talk about as if there are
various sources of randomness. According to some authors  \cite{jj} , the fundamental source of randomness, 
if at all any, is Heisenberg Uncertainty Principle.

A practical viewpoint is : \emph { Randomness occurs to the extent that something can not be predicted.}
Randomness is a matter of degree, and lies in the eyes of the beholder.
Poincare pointed out that the classic random outcome of a die throwing or a flipping coin, comes from
sensitive dependence to the initial condition. A small perturbation causes a large difference in the final  outcome,
thereby making prediction difficult. This sensitive dependence is a hallmark of \emph{Chaos} \cite{cfnfs}.

Chaotic trajectories even look random, and they pass many classic \emph{``tests'' of randomness}. This in fact generates the \emph{principle of equivalence} between
chaotic and random systems, as discussed in \cite{CW}.

Furthermore, chaotic systems might arise in very simple forms of iterative maps \cite{csi} \cite{cfnfs} (definition \ref{ifs}). 
Two such \emph{simple looking} systems are discussed in the section \ref{theory}, both exhibiting complex behavior.
They form the operating principle (or the theory) behind the proposed random number generator is presented.

However, chaotic systems trajectories sometimes can approach 
the subset of the state space called \emph{attractors},
thereby drastically reducing the perceived randomness.  
This is a natural outcome of running the simulation in digital computer, 
where the precision of the state point is limited to fixed number of bits of information,
without arbitrary precision, the simulated chaotic system would get into a cycle.  

The practical principle of \emph { garnering randomness out of chaotic systems } would be then, 
to ensure, that the system trajectory does not end up being in an attractor. 
Practical generation of random digits (bytes) are the topic of discussion
in section \ref{expt}. 

Finally, the result of two of the industry standard statistical test suites are presented in section \ref{results}.
It was found that the generated random byte stream passes
all these industry standard tests, thereby making the generated numbers suitable even for cryptography. 

\end{section}

\begin{section}{Theory of Operation}\label{theory}

In this section we would discuss the operating principle of the proposed
random number generator, and prove that, the iterative map is \emph{Devany 
Chaotic} (definition \ref{chaos} \cite{devany} ). Some of the prerequisite terms and definitions 
can be found in Appendix \ref{ap_1}.

We start with the idea of a fixed point iteration, 
or a single dimensional discrete dynamical system. 

Let there be a function $f : X \to X$ where $X$ is a metric space (definition \ref{mp}). 
Lets start with $x_0 \in X$, and let $x_1 = f(x_0)$. 
Let $x_2 = f(x_1)$. 

\begin{definition}\label{ifs}
\textbf{Iterative Function.}

The general form :-
$$
x_{n+1} = f ( x_n )
$$
is called iterative function system, or a fixed point iteration,
which is a type of single  dimensional discrete dynamical system, or map. 
\end{definition}

Many iterated systems exhibit what is known as ``chaotic behavior''.
``Chaos'' however remains a tricky thing to define. 
In fact, it is much easier to list properties that a system described as ``chaotic'' has 
rather than to give a precise definition of chaos \cite{devany}.

\begin{definition}\label{chaos}
\textbf{Chaotic Systems.}

A chaotic fixed point iteration (dynamical system \ref{ifs} ) 
is generally characterised by:-
\begin{enumerate}
\item{Having a dense (definition \ref{dense-set}) collection of points with 
(definition \ref{orbit}) periodic orbits.}
\item{Being sensitive to the initial condition of the system (so that initially nearby points can evolve quickly into very different states), a property sometimes known as the butterfly effect.} 
\item{Being topologically transitive (definition \ref{top-trans}).}
\end{enumerate}

\end{definition}

\begin{subsection}{Discontinuous Maps}
The map we would be discussing here is $x \leftarrow cot^2(x)$. 
This is a  discontinuous map. Unfortunately, while literature is filled with material on continuous maps, very few are really available on discontinuous maps \cite{SC}.
Reciprocal of any continuous map becomes discontinuous on the set of zeros
of the original map.

For example the map $x \leftarrow 1/x^2$ is a reciprocal map of $x^2$,
while the zero of the original $x=0$, became a point of discontinuity.

In general there is nothing interesting about these type of reciprocal maps, 
as this specific one has a fixed point $x^* = \pm 1$, and while iterated 
would immediately converge to one of them.

But, these maps behave very differently when put into a circle domain,
that is in modular form.

Such a modular form for the map $x \leftarrow 1/x^2$ is:-
$$
x_n = \left ( \frac{1}{x_{n-1}^2} \right) \; mod \; 1 \; ; \; x \in (0,1) 
$$ 
For some maps which are inherently periodic, this contraption becomes 
a natural choice to investigate their behaviour. 

For example we can define the reciprocal map $x \leftarrow 1/cos(x_{n-1})$,
whose domain is :- 
$$
D = \mathbb{R} \setminus \{ (2n +1)\pi/2 \}  \;  ; \; n \in \mathbb{Z}   
$$
But what happens in real is modular arithmetic with period $\pi$.
Hence, any output $x_n > \pi$ or $x_n < 0 $ 
should be treated modulo $\pi$, due to periodicity of $cos(x)$.
In this case, the modified system have the domain changed from $D$ to :-
$$
D_{\pi} = [0, \pi/2 ) \cup ( \pi/2, \pi]
$$ 
The output range should also be changed accordingly to :-
$$
x_n = \frac{1}{cos(x_{n-1}) } \; mod \; \pi 
$$ 
This works in a domain where $0,\pi$ are actually the same point.
Notice the unnecessary use of the period $\pi$ in the definition.
Therefore, we can generalise this for any periodic function 
with a period $T$ and we can go a step beyond to always normalise
the domain and range as the same interval: a.k.a the unit circle :-
$$
I = (0,1)
$$ 
by introducing the following definition of normal maps:-
 
\begin{definition}\label{nf}
\textbf{ Normalised Maps. }

Let $g(x)$ be any function with a period $T$ .
Then:-
\begin{equation}\label{nfe}
N_g(x) = \frac{ g(xT) \; mod \; T}{T} \; ; \; x \in (0,1)  
\end{equation}
is called the normalised rational map form of $g(x)$.
\end{definition}

With this definition is mind, lets see how a modulus operation applies on 
a linear map $y(x) = mx + C$ with mod $T$.
We note that obviously when $y_n = T$ , $y_n$ would become zero,
and would generate discontinuity, but while $y_n < T$, the derivative
remains $y' = m$. The modulus operation does not change the derivative
of the underlying function, just makes it undefined at periodic
points $nT \; ; \; n \in \mathbb{Z}$. 

Comparing equation \eqref{nfe} of definition \ref{nf} with 
the sawtooth map equation \eqref{st} 
suggests that the map $y(x)$  would become a sawtooth type map, 
which is known to be of chaotic type,
given the slope is greater than 1, that is $|m| > 1$.

\begin{equation}\label{st}
f(x) = m ( 1  - x ) \;  mod \; 1 \; ; \; m \in \mathbb{R} 
\end{equation}

In case of a general function $g(x)$, only the slope would vary,
which would become a function of x itself as in :-
$$
m(x) = g'(x) .
$$ 

A normalized function (definition \ref{nf}) 
diverging to infinity (having singularities)
would still have the derivative of its non modular analog, 
but would have introduced countable 
discontinuities due to  modular operation.

Image \eqref{inv_x_sq_normal} demonstrate the effect modularisation has on the 
function $1/x^2$. Note the discontinuities appearing on the side of $x \to 0$,
they are all chopped off to $(0,1)$, and as we are mapping a potentially 
infinite length with unit interval, we would need countable amount of them.  

The fixed point $x^* = 1$ can not be reached from anywhere in the domain now. 
If an iteration value $x_n$ gets close to the value $1$,
in the next iteration it would be pushed back to the left side of the map, 
with $x_{n+1} \approx 0$. 
Which means it would go into the zone of the chaos, 
where it will be eventually pushed back 
to any point other than 1. This dynamics is what essentially generates 
the chaotic behaviour in the system.
These type of systems have tremendous sensitivity to initial condition.

\begin{figure}[htbp]
  \centering
  \fbox{
    \includegraphics[scale=0.4]{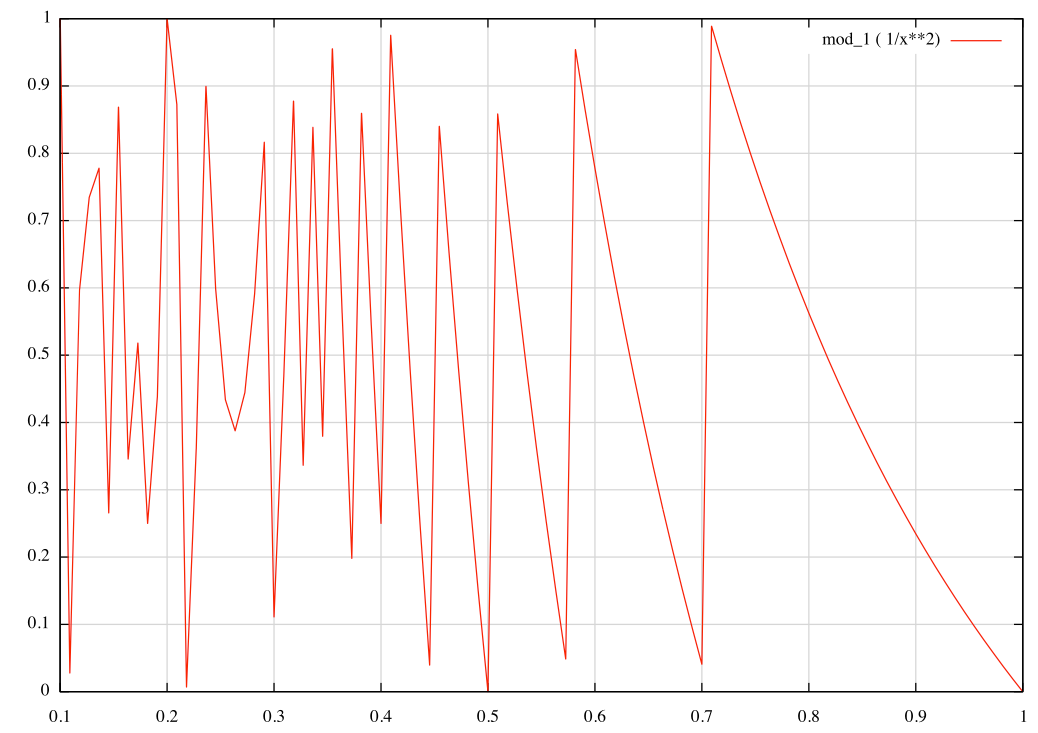}
  }
  \caption{ Function :  $1/x^2$  mod 1, showing the sawtooths }
  \label{inv_x_sq_normal}
\end{figure}

In fact, the modularisation  makes many trivial map
act in a non trivial way, for example the always 3-cycle map
of $x \leftarrow 1/(1-x)$ under the normal form 
$$
x_n = \frac{1}{1 - x} \; mod \; 1 \; ; \; x \in (0,1)
$$ 
becomes just as chaotic as depicted in Image \ref{inv_x_sq_normal}.

In this regard, it is curious to note how modulus operation 
can generate chaotic behaviour. Modulus operation
is a very special case of the general discontinuous maps $\mathcal{C}$  
studied in \cite{SC}, which we define next.

\begin{definition}\label{scf}
\textbf{Sharkovosky Chua Type Map \cite{SC} : $f \in \mathcal{C}$. }

Let $f : I \to I $ with $[a,b]$ be map such that:-
\begin{enumerate}
\item{$f$ is continuous everywhere except points :-
$$
D = \{ z_1 , z_2 , ..., z_r\} \; ; \; r \in \mathbb{Z} 
$$
}
\item{ $f$ is monotonic in the interval $(z_i,z_{i+1})$ for $i = \{ 0, 1, ..., r \}$,
with $z_0 = a, z_{r+1} = b$.
}
\item{ The limits 
$$
\lim_{x \to z_i - 0 }{f(x)} \ne \lim_{x \to z_i + 0 }{f(x)}
$$ and
$$
\lim_{x \to z_i \pm 0 } = f(x) \in \{a,b\} \forall i
$$ 
}
\item{ It is expansive $|f'(x)|= l >1$. 
For every point $x \in I \setminus \{z_i\}$ 
there exists an interval $U_x$ containing the point $x$ such that:-
$$
d(f(U)) > l d(U)
$$  for every interval $U \subset U_x$ 
where $d(V)$ is the length of the interval $V$.
}
\end{enumerate}
then, $f \in \mathcal{C}$ is called a Sharkovosky Chua Type Map. 
\end{definition}

Note that the original papers \cite{SC} definition does  
admit countable number of infinities, but that admission 
does not change any property of the map with respect to chaotic behaviour,
in fact augments it.

Now, we would introduce a special class of maps with more general
properties then Sharkovosky Chua type maps, 
these maps allow a specific type of non expansive intervals in the function domain.   

\begin{definition}\label{scg}
\textbf{Generalized Sharkovosky Chua Type Map : $f \in \mathcal{G}$.}

A function $f$ is said to be of Generalised Sharkovosky Chua type 
$f \in \mathcal{G}$, iff the function has the properties 
from (2,3) and modified (1, 4) with the following:-
\begin{enumerate}
\item{ Countable number of discontinuity. }
\item{ Every element  $x_s \in I_s$ such that $|f'(x_s)| < 1 $, 
has an element in forward orbit $x_g$ such that $|f'(x_g)| > 1 $.
}
\item{ It is expansive $|f'(x)|= l >1$. 
For every point $x \in I \setminus  (\{z_i\} \cup \{ I_s \}  ) $ 
there exists an interval $U_x$ containing the point $x$ such that:-
$$
d(f(U)) > l d(U)
$$  for every interval $U \subset U_x$ 
where $d(V)$ is the length of the interval $V$.
}

\end{enumerate}
\end{definition}

\begin{lemma}\label{reduction}
\textbf{Reduction to Sharkovosky Chua Type Map.}

A Generalised Sharkovosky Chua Type map (definition \ref{scg} ) 
can be \emph{contracted} into a Sharkovosky Chua Type map (definition \ref{scf}). 
\end{lemma}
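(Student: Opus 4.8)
The plan is to eliminate the non-expansive intervals $\{I_s\}$ of a Generalised map by absorbing each such interval into an iterate of $f$, yielding a map that is expansive everywhere and therefore of Sharkovosky Chua type matching definition \ref{scf}. The entire reduction rests on property (2) of definition \ref{scg}, which renders the non-expansive set dynamically transient rather than invariant.

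First I would extract the key consequence of property (2). Writing $S = \bigcup_s I_s$ for the non-expansive set and $E = I \setminus (\{z_i\} \cup S)$ for the expansive region, the hypothesis that every $x_s \in I_s$ admits a forward-orbit point $x_g$ with $|f'(x_g)| > 1$ means no orbit is trapped in $S$; hence every trajectory re-enters $E$. This lets me define the contracted map $\tilde f$ as the first-return map of $f$ to $E$: for $x \in E$ put $\tilde f(x) = f^{k(x)}(x)$, where $k(x)$ is the least iterate returning the orbit to $E$, so that $\tilde f = f$ wherever $f$ already preserves $E$.

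Next I would verify that $\tilde f$ inherits the structural conditions of definition \ref{scf}. Piecewise monotonicity is immediate because $\tilde f$ is a finite composition of the monotone branches of $f$ on a refined partition; the discontinuity set of $\tilde f$ is the original $\{z_i\}$ together with the return-boundaries, which stays countable and is therefore admissible by the remark following definition \ref{scf}; and the one-sided limits still lie in $\{a,b\}$ since property (3) carries over unchanged from the Generalised definition.

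The hard part will be the expansivity condition (4). Along a first-return excursion the derivative of $\tilde f$ is the chain-rule product $\prod_{j=0}^{k(x)-1} f'(f^j(x))$, which pits the expansive factors (each at least $l > 1$) against the contracting factors picked up in $S$. Property (2) guarantees only that an expansive point is eventually reached, not that this product exceeds one, so the crux is to prove that the expansion earned on $E$ dominates the contraction suffered on $S$. I would argue this by using that the non-expansive intervals lie isolated between expansive branches, so that each excursion into $S$ is bracketed by expansive steps whose factor $l$ outweighs the bounded contraction, yielding a uniform length-expansion $d(\tilde f(U)) > l'\,d(U)$ with $l' > 1$. Securing this uniform bound $l'$, independent of the starting interval and of the unbounded return time $k(x)$, is where the substance of the lemma resides.
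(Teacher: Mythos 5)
Your construction is, in substance, the same one the paper uses: the paper's proof also ``contracts'' the orbit segment $x_{g'} \to x_s \to x_g$ into a single step and deletes the intervals $\{I_s\}$, i.e.\ it is an informal description of exactly the first-return map $\tilde f(x) = f^{k(x)}(x)$ that you define. Where you go beyond the paper is in attempting to verify the conditions of definition \ref{scf} for the contracted map, and you have correctly located where the argument actually lives: condition (4), uniform expansivity of $\tilde f$.

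However, your proposed resolution of that crux does not work, and cannot work from the stated hypotheses. Property (2) of definition \ref{scg} is purely qualitative: it says every point of $I_s$ has some forward-orbit point at which $|f'| > 1$, but it gives no lower bound on $|f'|$ inside $I_s$ and no coupling between the contraction suffered there and the amount of expansion available before the orbit re-enters the non-expansive set. Your phrase ``bounded contraction'' assumes precisely what is missing: nothing in the definition prevents $|f'| = \epsilon$ on $I_s$ with $\epsilon$ arbitrarily small (indeed $f' = 0$ at a critical point), while the excursion passes through only one expansive point with factor barely above $l$ before returning to $\{I_s\}$; the chain-rule product along that return is then roughly $l\epsilon < 1$, and no uniform $l' > 1$ exists. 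So the reduction, read literally against definition \ref{scg} as stated, cannot be completed this way; one would need an added quantitative hypothesis, e.g.\ that the derivative product along any excursion through $\{I_s\}$ exceeds some $\lambda > 1$. For what it is worth, the paper's own proof is silent on exactly this point: it observes that after eliminating $\{I_s\}$ ``no point $x_s$ remains where $|f'(x_s)| < 1$,'' which is a statement about the deleted domain, not about the derivative of the composed map. Your attempt is thus no less complete than the paper's, with the difference that you name the gap explicitly; neither argument closes it.
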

\begin{proof}
The reduction is straight forward.

Informally, we contract the orbits from an expansive region to 
a non expansive region, to another expansive region.

As the intervals $\{I_s\}$ forwards all the points to 
the points $x_g$ such that $|f(x_g)| > 1$, we can contract
those iterative steps (the orbit from $x_{g'}\to x_s \to x_g$ ) 
into a single step, and eliminate the
whole intervals $\{I_s\}$ where $|f(x_s)| <1$, thereby contracting 
the intervals, and no point $x_s$ remains where $|f'(x_s)| < 1$.

\end{proof}

\begin{theorem}\label{chaos-sc}
\textbf{ Sharkovosky Chua Type Maps are Chaotic.}

Sharkovosky Chua Type Maps are chaotic, 
so are the generalised Sharkovosky Chua Type maps. 
\end{theorem}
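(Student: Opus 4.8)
The plan is to verify, for a map $f \in \mathcal{C}$ of Definition \ref{scf}, the three defining properties of Definition \ref{chaos} — dense periodic points, sensitive dependence, and topological transitivity — and then to carry them over to $f \in \mathcal{G}$ via Lemma \ref{reduction}. The conceptual engine will be symbolic dynamics: I would show that $f$ is conjugate to the full one-sided shift $\sigma$ on $r+1$ symbols, assigning one symbol to each monotone branch $(z_i, z_{i+1})$. Before coding, though, I would establish the structural linchpin: \emph{each monotone branch maps onto the whole interval $(a,b)$}. On a branch $f$ is continuous and monotone, and by property (3) of Definition \ref{scf} its two one-sided endpoint limits lie in $\{a,b\}$; were they equal the branch would be constant, contradicting expansivity $|f'| = l > 1$, so they are $a$ and $b$ in some order and $f$ carries the branch onto $(a,b)$. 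This ``Markov onto'' structure is exactly what makes every itinerary realizable.

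Next I would build the itinerary map $h(x)_k = $ (branch index of $f^k(x)$), defined off the countable set of points whose orbit ever meets a discontinuity $z_i$. Because every branch maps onto $(a,b)$, the cylinders $C(s_0\cdots s_m) = \{x : h(x)_0 = s_0, \ldots, h(x)_m = s_m\}$ are nonempty nested intervals and $h \circ f = \sigma \circ h$ holds by construction. Expansivity then upgrades this semiconjugacy to a conjugacy: $f^m$ expands $C(s_0\cdots s_m)$ by at least $l^m$ while its image lies inside $(a,b)$, so $d(C(s_0\cdots s_m)) \le (b-a)/l^m \to 0$, and each admissible itinerary pins down a unique point.

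The three properties then fall out, and I would read them straight off the cylinders rather than merely quote the shift. Given any open interval $V$, for large $m$ some full cylinder $C = C(s_0 \cdots s_m) \subseteq V$; since $f^{m+1}(C) = (a,b)$, every target interval $W$ satisfies $f^{m+1}(V) \supseteq W$, giving topological transitivity. The periodic itinerary $\overline{s_0 \cdots s_m}$ pins down a point of period dividing $m+1$ lying in $C \subseteq V$, giving density of periodic points. Sensitivity follows most directly from expansion: two nearby distinct points separate by the factor $l$ at each step until they either exceed a fixed fraction of $(b-a)$ or land in different branches, in either case realizing a uniform sensitivity constant.

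Finally, for $f \in \mathcal{G}$ (Definition \ref{scg}) I would invoke Lemma \ref{reduction}: the contraction collapses each transient non-expansive interval $I_s$ onto the expansive point $x_g$ in its forward orbit, producing an ordinary map in $\mathcal{C}$ that is topologically conjugate to $f$ on its recurrent set, so the chaos just proved transfers back. The hard part is the middle step — making the coding fully rigorous across the discontinuities, i.e. controlling the exceptional countable orbit set and verifying surjectivity and injectivity of $h$ from the onto-branch and expansivity facts — together with checking that the reduction of Lemma \ref{reduction} is genuinely a conjugacy on the invariant set, so that it preserves each of the three properties individually rather than merely the qualitative existence of chaos.
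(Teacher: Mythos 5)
Your proposal is correct in outline but takes a genuinely different route from the paper: the paper's entire proof of this theorem is a citation --- ``this has been discussed in \cite{SC}'' --- followed by an appeal to Lemma \ref{reduction} for the generalised case, whereas you reconstruct the core result from scratch via symbolic dynamics (full branches $\Rightarrow$ itinerary map $\Rightarrow$ conjugacy to the full shift on $r+1$ symbols $\Rightarrow$ the three Devaney properties read off cylinders). What your approach buys is substantial: it actually verifies the conditions of Definition \ref{chaos} for maps in $\mathcal{C}$, which the paper never does anywhere, and it makes explicit which structural facts (branch surjectivity, expansivity) drive each property; what the paper's approach buys is brevity and deference to the published source, at the cost of leaving the theorem's content entirely external. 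One caveat on your key step: your claim that \emph{every} monotone branch maps onto $(a,b)$ uses property (3) of Definition \ref{scf} at both endpoints of each branch, but as stated that property constrains only the interior discontinuities $z_1,\ldots,z_r$, not the outermost points $z_0=a$, $z_{r+1}=b$; for a map like the sawtooth $f(x) = m(1-x) \bmod 1$ with non-integer $m$ the end branches are genuinely not full, the coding lands in a subshift rather than the full shift, and your cylinder argument needs the standard (but not automatic) modification in which expansivity forces every interval to eventually cover a full interior branch. Finally, your flagged concern about Lemma \ref{reduction} is well placed --- the paper's proof of that lemma is itself only an informal contraction argument, so your demand that the reduction be a genuine conjugacy on the invariant set preserving each Devaney property separately is more careful than anything the paper supplies.
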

\begin{proof}
This has been discussed in \cite{SC}.
As the generalised Sharkovosky Chua maps can be contracted 
into a  Sharkovosky Chua map, they are chaotic too. 
\end{proof}

Now we give one example of a normalised map that is not in $\mathcal{G}$.
Curiously , the map $1/sin^2(x) \not \in \mathcal{G}$, because 
the non expansive zone does not push the orbit back to the chaotic zone,
as required by definition \eqref{scg} property (3).
As the fixed point (about 0.37) lies in the non expansive zone, 
the map converges, and is not chaotic at all. 

This can nicely be explained with the Image \ref{inv_sin_sq_normal},
see at the zone where the function has zero, does not have the expansive
property, the zone is stable.

\begin{figure}[htbp]
  \centering
  \fbox{
    \includegraphics[scale=0.4]{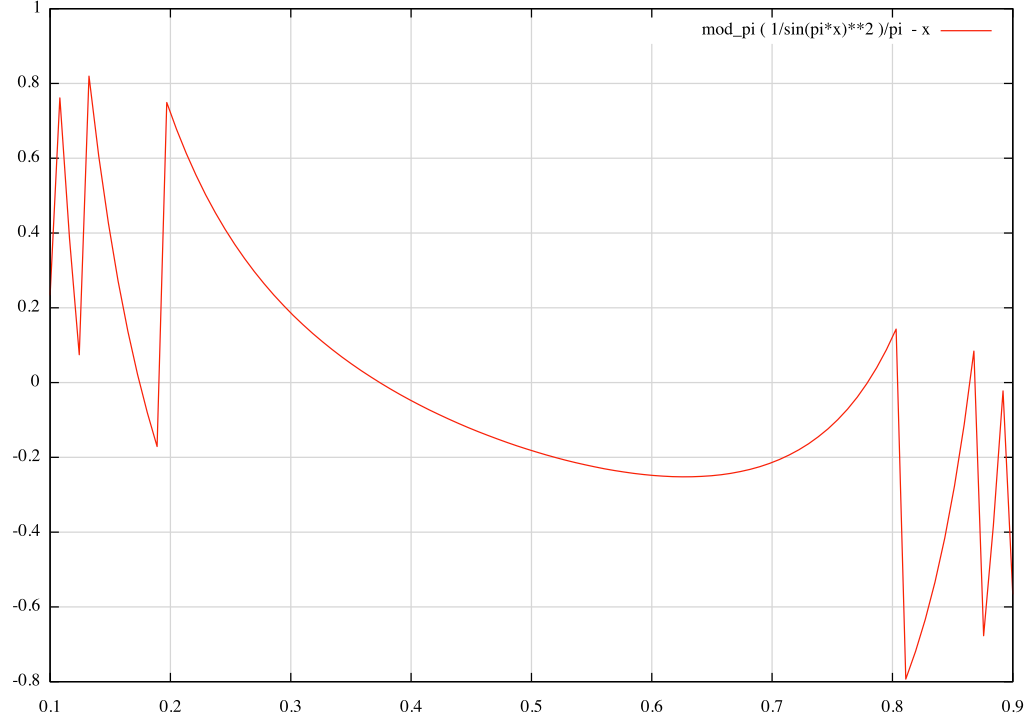}
  }
  \caption{Normal form of [ $1/sin^2(x)$ - x ] showing the fixed point zone }
  \label{inv_sin_sq_normal}
\end{figure}

What maps are of Generalised Sharkovosky Chua type $f \in \mathcal{G}$ ?
Take a polynomial map with singularity, normalise it, it probably goes in $\mathcal{G}$.
Examples are $1/x^2$ map, $1/(1-x)$ map. 
The trigonometric reciprocal maps $1/cos(x) \in \mathcal{G}$, 
so is the map $1/cos^2(x)$
The map $tan(x) \in \mathcal{G}$ and $cot(x) \in \mathcal{G}$.

Finally, the map $cot^2(x) \in \mathcal{G}$, 
which is discussed in the next subsection.

\end{subsection}

\begin{subsection}{Reciprocal Cot Squared Map.}

In this subsection we discuss the dynamics of the $cot^{-2}(x)$ map.
This is formally represented by equation \eqref{cot-2}
\begin{equation}\label{cot-2}
x_n = \frac{1}{cot^2(x_{n-1})} =  \left ( \frac{cos(x_{n-1})}{sin(x_{n-1})} \right)^2
\end{equation}

This map can is presented in the semi normal form,
that is the domain as $(0,1.0)$ but the range is in $\mathbb{R}$ 
in the figure \eqref{cot_2}. In the figure \eqref{cot_2}, we did not 
present the singular points $\{0,1\}$ and draw only the range $(0.1,0.9)$.

\begin{figure}[htbp]
  \centering
  \fbox{
    \includegraphics[scale=0.4]{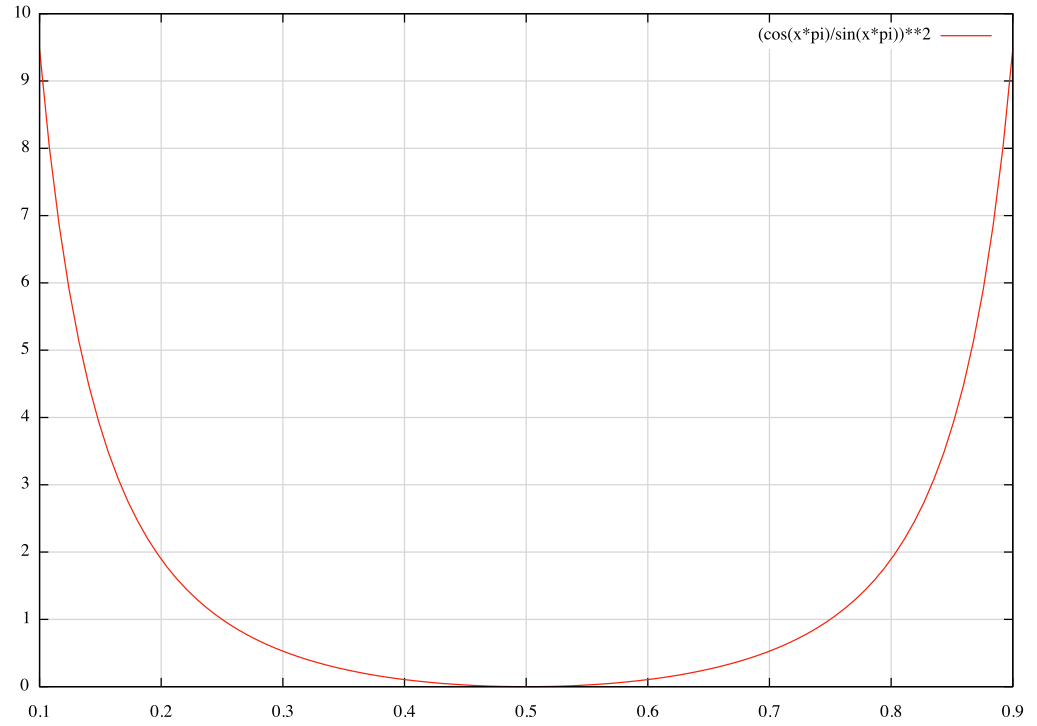}
  }
  \caption{The function  $cot^2(x \pi)$ }
  \label{cot_2}
\end{figure}

It is apparent that due to symmetry, to normalise the function 
it can be taken as in $\pi/2$ instead of $\pi$, as if we normalise with period $\pi$,
the normalised function $N_c$ would be $N_c(x) = N_c(1-x)$.

The function normalisation can be seen in figure \eqref{cot_2_n}. 

\begin{figure}[htbp]
  \centering
  \fbox{
    \includegraphics[scale=0.35]{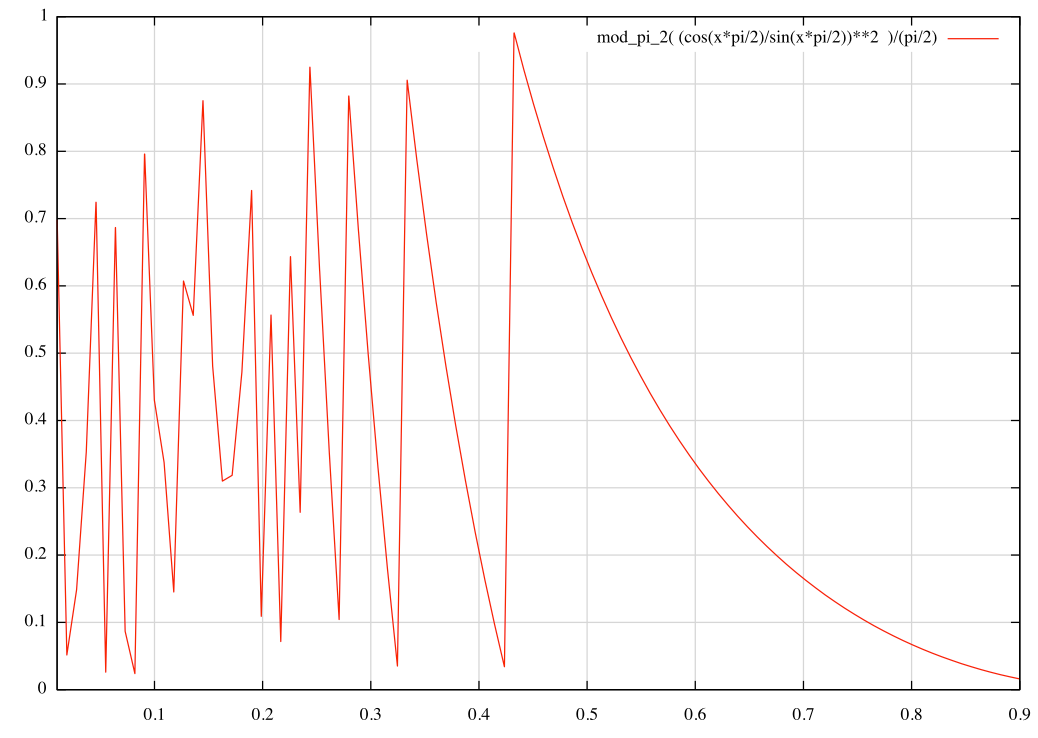}
  }
  \caption{Normal form of $cot^2(x)$ from equation \eqref{normal-cot} }
  \label{cot_2_n}
\end{figure}

\begin{figure}[htbp]
  \centering
  \fbox{
    \includegraphics[scale=0.35]{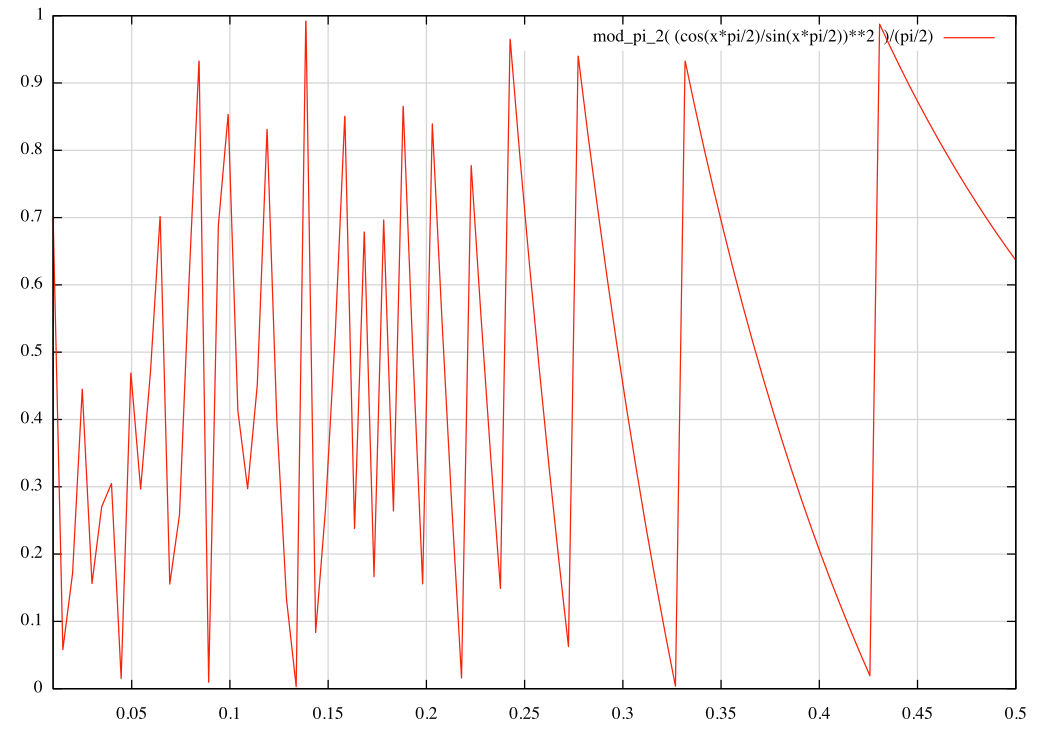}
  }
  \caption{ The zone of chaos in $N_c$ (equation \ref{normal-cot}) }
  \label{cot_2_n_c}
\end{figure}

\begin{figure}[htbp]
  \centering
  \fbox{
    \includegraphics[scale=0.35]{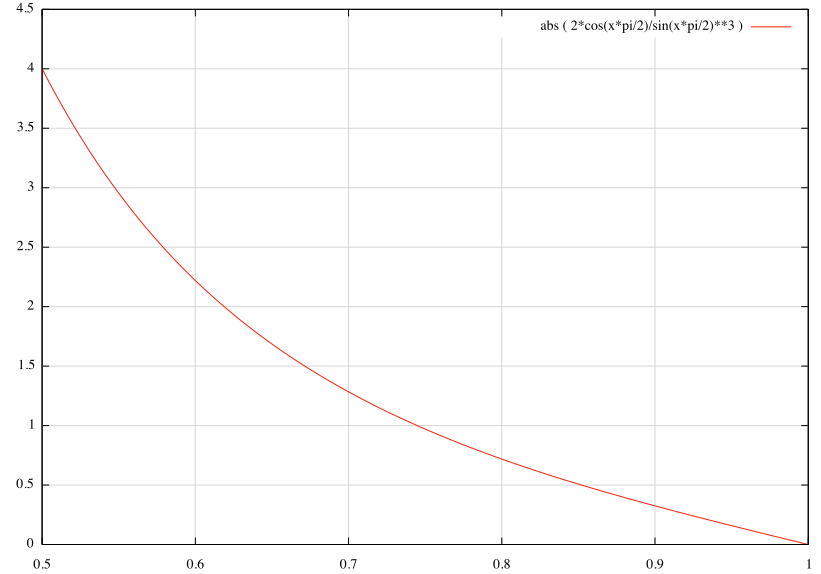}
  }
  \caption{ The Normalised Derivative $cot^2(x)$ \eqref{der} }
  \label{cot_2_n_d}
\end{figure}

\begin{theorem}\label{cot-sq-c} 
\textbf {Iteration of $x_{n+1} = cot^2 ( x_n ) $ is chaotic.}

The iteration of $x_{n+1} = cot^2 ( x_n ) $ 
by being a generalised Sharkovosky Chua type map \ref{chaos-sc} is chaotic. 
\end{theorem}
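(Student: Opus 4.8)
The plan is to verify directly that the normalised form of $cot^2(x)$ (equation \eqref{normal-cot}) belongs to the class $\mathcal{G}$ of Generalised Sharkovosky Chua Type Maps (definition \ref{scg}), after which the conclusion is immediate from Theorem \ref{chaos-sc}, which asserts that every map in $\mathcal{G}$ is chaotic. The whole burden is therefore to check the structural requirements of definition \ref{scg} on the interval $I = (0,1)$: the inherited conditions (2) and (3) of definition \ref{scf} together with the modified conditions (1) and (4).

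First I would pin down the discontinuity set. Since $cot^2(x) = (cos(x)/sin(x))^2$ diverges exactly where $sin = 0$, the un-normalised map has a singularity at every integer multiple of $\pi$; after the normalisation of definition \ref{nf} (here with effective period $\pi/2$, using the symmetry $N_c(x) = N_c(1-x)$ noted above) and its modular reduction, each branch crosses every integer level as it escapes to infinity, and the fold back into $(0,1)$ produces a countable collection of jump points $\{z_i\}$ accumulating at the singular preimages. This is the modified condition (1). On each open interval $(z_i, z_{i+1})$ the map is a single monotone branch lying between two consecutive modular resets, which gives condition (2); and at each $z_i$ the one-sided limits are the two endpoints $0$ and $1$ of $I$ and are distinct, since the modular fold drops the increasing branch from $1$ back to $0$ (see the sawtooths of Figures \ref{cot_2} and \ref{cot_2_n}), which is condition (3).

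Next I would treat expansiveness, condition (4), together with the extra condition (2) of definition \ref{scg} governing the non-expansive patch. Differentiating gives $f'(x) = -2\,cot(x)\,csc^2(x)$, normalised as in equation \eqref{der}, whose magnitude exceeds $1$ on all of $I$ except for one bounded sub-interval $I_s$, visible as the flat zone in Figure \ref{cot_2_n_d}. On the complement $I \setminus (\{z_i\} \cup I_s)$ I would verify $|f'| = l > 1$ and the local length-expansion $d(f(U)) > l\,d(U)$ for sufficiently small neighbourhoods, exactly as in the ordinary expansive case, leaving only the interval $I_s$ to be dispatched.

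The hard part, and the genuine content of the theorem, will be verifying condition (2) of definition \ref{scg}: that every $x_s \in I_s$ with $|f'(x_s)| < 1$ has a point $x_g$ in its forward orbit with $|f'(x_g)| > 1$. This is exactly the property that separates $cot^2$ from $1/sin^2$: in the latter the fixed point lies inside the non-expansive zone and attracts nearby orbits, so $I_s$ becomes a trap, condition (2) fails, and the map escapes $\mathcal{G}$ (as discussed around Figure \ref{inv_sin_sq_normal}). For $cot^2$ I would argue that $I_s$ contains no attracting fixed point and is not forward-invariant, so that any orbit entering it is, within finitely many iterations, carried out into the expansive region of Figure \ref{cot_2_n_c} where $|f'| > 1$. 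Making this escape statement rigorous --- establishing that $I_s$ is not forward-invariant and bounding the number of steps needed to leave it --- is where the real work lies, and is what the supporting figures are meant to render plausible. Once condition (2) holds, the contraction of Lemma \ref{reduction} applies and Theorem \ref{chaos-sc} delivers chaos, completing the proof.
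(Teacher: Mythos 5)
Your proposal takes essentially the same route as the paper: verify that the normalised form \eqref{normal-cot} satisfies the conditions of definition \ref{scg} and then conclude chaos from Theorem \ref{chaos-sc}. The only difference is one of caution --- where you anticipate a finitely-many-iterations escape argument for the non-expansive interval $I_s$, the paper settles condition (2) in a single step, reading off from the derivative \eqref{der} and the figures that $|f'(x)| < 1$ only for $x > 0.7$ and that such points map immediately into the expansive zone $(0.1,0.3)$; your flagged ``real work'' is precisely the step the paper handles by this one-iteration observation.
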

\begin{proof}
The normalised function (definition \ref{nf}) 
form for $cot^2(x)$ with period $\pi/2$  can be written by equation \eqref{normal-cot}.

\begin{equation}\label{normal-cot}
N_c(x) = \frac{2}{\pi} \left (  cot^2 \left ( \frac{\pi x }{2} \right) \; \; mod \; \frac{\pi}{2} \right )     
\end{equation}

It is apparent from the figure \eqref{cot_2_n} 
that for values  of $x_n \to 1 $ 
the next value $x_{n+1}$ would be in the chaotic zone, 
depicted in figure \eqref{cot_2_n_c}.  

The derivative is :-
\begin{equation}\label{der}
\left |\frac{d[cot^2(x)]}{dx} \right| = \left  | \frac{2cos(x)}{sin^3(x)} \right | 
\end{equation}

We note that the derivative $|f'(x)| < 1 $ only when $x_n > 0.7$
however, at that point the next $x_{n+1}$ would push it between $(0.1,0.3)$
which is a massively chaotic zone as depicted by figure \eqref{cot_2_n_c}.
Hence, every non expansive point maps back into the zone of chaos,
which is also the expansive zone with $|f'(x)|>1$ .

Hence equation \eqref{normal-cot} satisfies condition of generalised 
Sharkovosky Chua type map, and hence is chaotic.
\end{proof}

We now present the theorem of \emph{principle of equivalence between
chaotic and random behavior}, which would be used to apply the 
random behaviour of the iterate $cot^2(x)$.
 
In a very insightful paper \cite{CW} it has been proven that:- 
\emph{chaotic and random systems are observationally indistinguishable}.
If that is the case, then, 
one can replace a random system by an equivalent chaotic system, 
and vice versa, as has been argued in \cite{CW}. 

\begin{theorem}\label{poe}
\textbf{Chaotic systems and Random Systems are Observationally Equivalent
\cite{CW}.
}

A chaotic system can be observationally replaced with a random system, 
and vice versa. 
\end{theorem}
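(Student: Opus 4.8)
The plan is to make \emph{observational equivalence} precise as the coincidence of the finite-dimensional distributions of the symbol sequences produced by the two systems, and then to establish the equivalence in both directions using the machinery of measure-theoretic ergodic theory. First I would model the chaotic system as a measure-preserving dynamical system $(X, \mathcal{B}, \mu, T)$, where the invariant measure $\mu$ is the natural (e.g. Sinai--Ruelle--Bowen) measure supported on the attractor, and model the random system as a stationary stochastic process $\{Y_t\}_{t \in \mathbb{Z}}$ taking values in a finite alphabet. The bridge between the two worlds is a finite measurable partition $\mathcal{P} = \{A_1, \dots, A_k\}$ of $X$, which induces an observation map $\Phi : X \to \{1, \dots, k\}$ by $\Phi(x) = i$ exactly when $x \in A_i$.

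For the deterministic-to-random direction, I would form the observed process $Z_t = \Phi(T^t x)$. Because $\mu$ is $T$-invariant, the law of $\{Z_t\}$ is automatically stationary, so it is a bona fide stochastic process. The key step is to show that the chaotic hypotheses from Theorem \ref{cot-sq-c}---sensitive dependence together with the mixing and expansive structure of a generalised Sharkovosky--Chua map---force this process to inherit genuine randomness: strong mixing of $T$ implies that the finite-block distributions of $\{Z_t\}$ decorrelate, so no observer restricted to the coarse-grained symbols can distinguish $\{Z_t\}$ from an i.i.d. or Markov source.

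For the random-to-deterministic direction, I would use the canonical shift realization. Given the process $\{Y_t\}$, the Kolmogorov extension theorem produces a shift-invariant measure $\nu$ on the sequence space $\Omega = \{1, \dots, k\}^{\mathbb{Z}}$, and the left shift $\sigma : \Omega \to \Omega$ is a deterministic map that, observed through the zeroth coordinate, reproduces $\{Y_t\}$ exactly. Thus every stationary random source is literally the observation of a deterministic dynamical system, which gives the easy half of the equivalence.

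The hard part is upgrading these two one-directional constructions into a genuine equivalence: to match a given chaotic system to a given random system one must produce a measure-theoretic \emph{isomorphism} between them, not merely agreement of a few marginals. The essential obstacle is verifying that the chaotic map is measure-theoretically Bernoulli and computing its Kolmogorov--Sinai entropy $h_\mu(T)$; once this is done, Ornstein's isomorphism theorem guarantees that any two Bernoulli systems of equal entropy are isomorphic, so the chaotic system and an i.i.d. source of matching entropy become interchangeable at the level of \emph{all} observable statistics. Establishing the Bernoulli property for $cot^2$---via a Markov or generating partition (Sinai's theorem) and a verification that $0 < h_\mu(T) < \infty$---is the technical heart of the argument, and the remaining steps then follow from the cited results of \cite{CW} together with standard ergodic theory.
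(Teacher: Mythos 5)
The first thing to note is that the paper does not prove this statement at all: Theorem \ref{poe} is stated as an imported result, and the surrounding text simply points to Werndl's paper \cite{CW}; no proof environment follows the theorem. Your proposal is therefore not comparable to ``the paper's proof'' --- it is an attempt to reconstruct an argument that the paper delegates entirely to a citation. To your credit, the machinery you invoke (coding a measure-preserving system through a finite partition, the shift realization of a stationary process via Kolmogorov extension, and Ornstein's isomorphism theorem as the bridge between deterministic systems and i.i.d. sources) is genuinely the machinery on which the observational-equivalence results of \cite{CW} and the related literature rest, so your outline points at the right tools.

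However, as a proof the proposal has real gaps. First, the deterministic-to-random direction is loose: strong mixing of $T$ does not make the observed process $\{Z_t\}$ i.i.d. or Markov; it only gives asymptotic decorrelation of blocks, while observational equivalence requires matching all finite-dimensional distributions --- and that is exactly what you cannot conclude without the Bernoulli/Ornstein step you explicitly defer. Second, and more fundamentally, your framework does not attach to the paper's hypotheses. The paper's notion of chaos (Definition \ref{chaos}) is topological --- Devaney's dense periodic orbits, sensitivity, and topological transitivity --- and Theorem \ref{cot-sq-c} is argued only in that category. Your argument needs far more: an invariant (SRB) measure $\mu$, ergodicity or mixing of $\mu$, a generating partition, and the Bernoulli property with $0 < h_\mu(T) < \infty$. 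None of these follow from Devaney chaos, and none are established anywhere in the paper for the $cot^2$ map; you yourself call the Bernoulli verification ``the technical heart'' and leave it undone. So the proposal is a plan whose essential step is missing, whereas the paper sidesteps the issue by treating the statement as a black-box citation of \cite{CW}.
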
 

\end{subsection}
\end{section}

\begin{section}{Experimental Validations }\label{expt}
As the theorem \ref{poe} suggests, by virtue of being chaotic, 
the iteration of theorem \eqref{cot-sq-c} could serve as a randomness generator.
However, there are cautionary words involved because 
\emph { arbitrary precision } is needed while doing so.  

When the digits are of finite precision, 
say 2 digits before and after decimal, 
the system can not exhibit an arbitrary long period, 
as the total number of states possible are $10^{2+2} = 10^4$.   
In general if the number of digits in base $b$ are of $N$ , 
then the maximum possible number of states $|S|$ becomes:-
$|S| = b^N $ which is still finite. 
However, as we keep on increasing $N$, $|S|$ increases exponentially,
and the chance of repeat decreases.

The perl code for computing the iteration of $x = cot^2(x)$ is presented here.
\begin{lstlisting}[style=MyPerlStyle]
sub cot
{
	my $x = $_[0];
	cos($x) /sin ($x) ;
} 
while ( 1 )
{
	printf  "%.18f\n", $x ;
	$x = cot($x) ** 2 ;
}
\end{lstlisting}
The above code generates $x \in \mathbb{R}_+$,
which is not well suited for visualizing the generated randomness.
A \emph{chaotic dense orbit} can 
only be visualized in case  of a bounded domain. 
To move the resulting value in the bounded $(0,1)$ domain the following 
code can be used:-
\begin{lstlisting}[style=MyPerlStyle]
 while ( 1 )
 {
	$x = cot($x) ** 2 ;
	my $fract = $x - int ( $x ) ;
 	printf "%.18f\n", $fract ;
 }
\end{lstlisting}

This code generates $x \in (0,1)$.
The result of the iteration is \emph{dense} (definition \ref{dense-set}) 
as depicted in the figure \ref{cot_2_e},
and a close-up view of the iteration is shown in the figure \ref{cot_2_close}.

\begin{figure}[htbp]
  \centering
  \fbox{
    \includegraphics[scale=0.6]{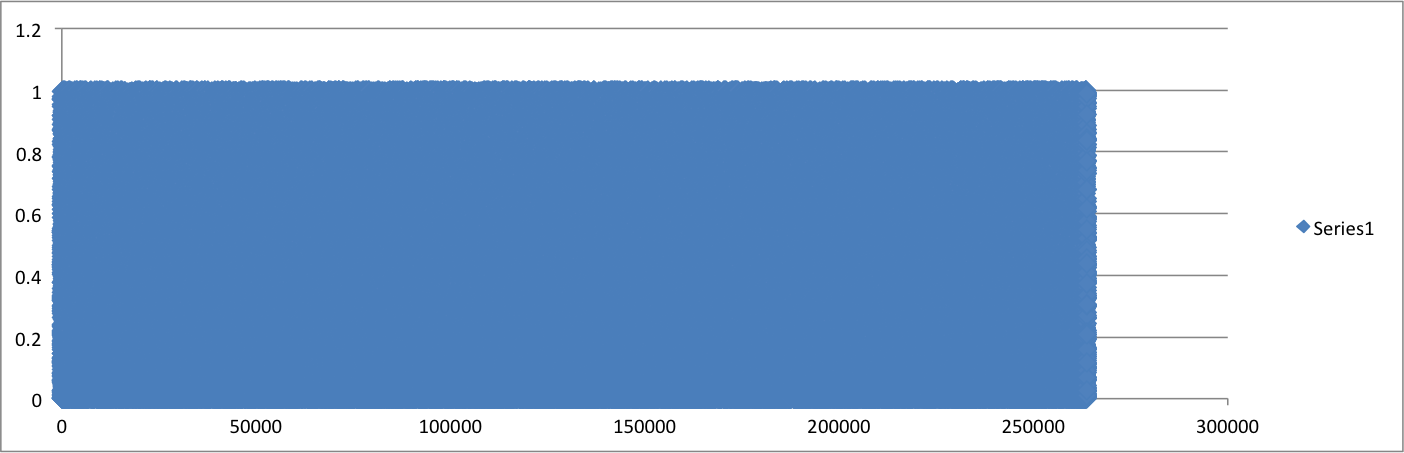}
  }
  \caption{Iteration of $x_{n+1} = cot^2(x_n)$ (fractional part)}
  \label{cot_2_e}
\end{figure}

\begin{figure}[htbp]
  \centering
  \fbox{
    \includegraphics[scale=0.7]{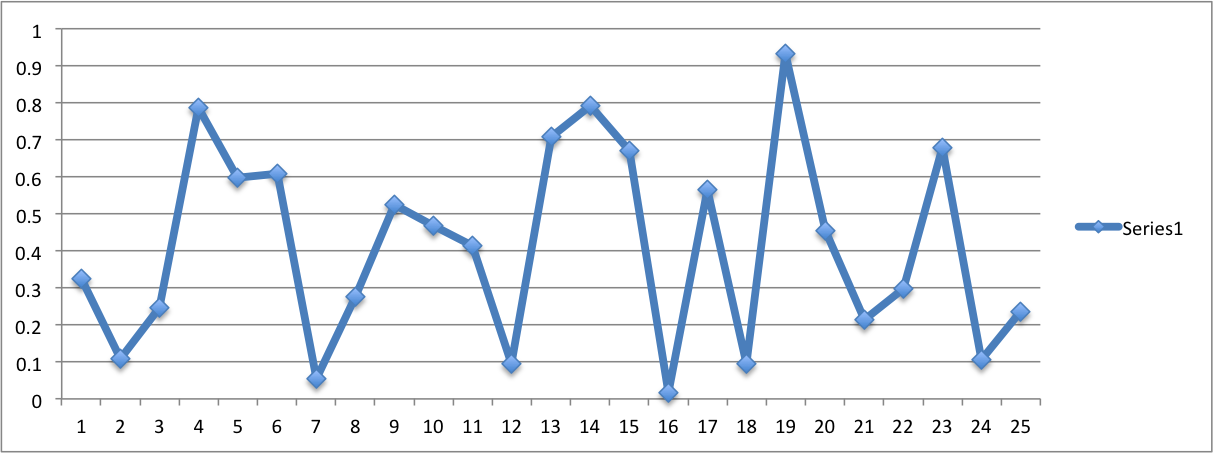}
  }
  \caption{Iteration of $x_{n+1} = cot^2(x_n)$ Showing a small Range (fractional part)}
  \label{cot_2_close}
\end{figure}

However, all industry standard statistical tests run on binary byte streams of data.
Therefore to \emph{test for randomness} one has to transform 
this sort of random ``real'' data points into a stream of random bits.

Hence, we needed to convert this \emph{apparently random iteration} values 
into stream of bytes so that randomness test suites like diehard/NIST 
can be used to verify the randomness. 
This is discussed in the next subsection.

\begin{subsection}{Digital Representation of Real}\label{ieee-d}

The IEEE floating point specification mandates that the \emph{double} representation would be 
of 64 bits or 8 bytes. If the representation starts with bit number $0$ as $b_0$ and ends with bit number
$63$ as $b_{63}$ , then, the $b_{62}..b_{52}$ is exponent of the number, while $b_{63}$ is the sign bit.
In byte wise notation it is $B_7B_6...B_1B_0$ .

So, if the function $x_{n+1} = cot^2(x_n)$ is really chaos generating, then, if we can gather 
the mantissa : $b_{51}...b_0$ of the $x_n$ , the resulting bitstream should behave as random.
However, it wold not be optimal for extracting $b_{51}...b_0$ , as they are not in the byte boundaries.
A better, faster option would be discarding 2 of the most significant bytes, that is $B_7 B_6$ and taking 
the other bytes $B_5 B_4 ... B_0$ as random bytes.

The Python language uses $hex()$ call to show the floating point number :-
$$
[sign] \; [0x] \; integer \; [. \; fraction]\;  [p \; exponent]
$$
Hence, the following python code demonstrates the exact random bytes as in the hexadecimal string.
\begin{lstlisting}[style=MyPythonStyle]
import math
import time
x = time.time() #Seed it, go to Chaos.
x = x - math.floor( x ) 
while True:
    y = math.tan(x)
    y = y * y
    x = 1.0/y
    s = x.hex()
    a = s.split(".")
    b = a[1].split("p")
    s = b[0]
    print s
\end{lstlisting}

The resulting output is shown below:-
\begin{lstlisting}[style=MyPythonStyle]
49c07b03923a2
2201d34423ec8
47527dadaecb0
\end{lstlisting}

The following Java class code precisely accomplishes the same thing:-

\begin{lstlisting}[style=MyJavaStyle]
public class DoubleImpl  {
    private double number;
    public DoubleImpl()
    {
        number = System.nanoTime() ;
    }
    public double getNumber()
    {
        double r = Math.tan(number) ; 
        number = 1/(r*r) ;
        return number;
    }
    public byte[] getRandomBytes()
    {
        long l = Double.doubleToRawLongBits(getNumber()); 
        byte[] bytes = toByta_StripSignByte(l);
        return bytes;
    }
    /* Return bytes from B_5 to B_0  here */
    public static byte[] toByta_StripSignByte(long data) {
        return new byte[] {
            (byte)((data >> 40) & 0xff),   //B_5
            (byte)((data >> 32) & 0xff),   //B_4
            (byte)((data >> 24) & 0xff),   //B_3
            (byte)((data >> 16) & 0xff),   //B_2
            (byte)((data >> 8) & 0xff),    //B_1
            (byte)((data >> 0) & 0xff)     //B_0
        };
    }
} 
\end{lstlisting}
The above code is platform independent, and authors also wrote a faster C implementation.
\end{subsection}
\end{section}

\begin{section}{Results of The Statistical Testing}\label{results}
For statistical testing suites a binary file is to be passed
to the test suite, which is then read as a stream of random bits.

We called the \emph{getRandomBytes} function (which generates 6 random byte each time it gets called) to generate medium (min:600 Mega Bytes) to long ( Max: 6 Giga Bytes) binary files. 

On a 2.66 GHZ Core 2 Duo Macbook pro, with OS X Lion, using the \emph{C langauge} variant , the 60 Megabyte file takes 2 sec, 
for 600 MB 19 sec, and 6 Giga Bytes of random bytes generation takes 290 secs.     

This file, was passed to the diehard family \cite{DIEHARD} of test suite.     
The findings are summarised  in the  table [\ref{table:pr-diehard-results}].

However, diehard is an old generation of test suite, and NIST suite \cite{NIST}
is the current industry norm for randomness testing.
This generator passes all the NIST suite \cite{NIST} tests with ease.
The file size of 6 gigabyte was passed to the NIST suite, with minimum bitstream length of 10000 bits, 
with minimum number of bitstream 1000, to a maximum of 10000. 
The resulting p-values are aggregated over all the bitstreams.

The results are summarized in the table  [\ref{table:pr-nist-results}].

\begin{table}[!ht]
\caption{Results of the Diehard Suite }
\centering
\begin{tabular}{ | l | l |  l |  }
\hline\hline
Test Name                             & Resulting p-value[s]       &  End Result  \\ [0.5ex]
\hline
Birthday Spacing                      & 0.948788                                & PASSED     \\ \hline
Overlapping 5-Permutation             & 0.874604,0.993319                       & PASSED     \\ \hline
Binary Rank Test (31X31) Matrix       & 0.724738                                & PASSED     \\ \hline
Binary Rank Test (32X32) Matrix       & 0.853068                                & PASSED     \\ \hline
Binary Rank Test (6X8) Matrix         & 0.028129                                & PASSED     \\ \hline
The Bit Stream Test                   & min 0.0545,  max 0.8906                 & PASSED     \\ \hline
OPSO                                  & min 0.0373,  max 0.9579                 & PASSED     \\ \hline   
OQSO                                  & min 0.0500,  max 0.9688                 & PASSED     \\ \hline  
DNA                                   & min 0.0359,  max 0.9986                 & PASSED     \\ \hline    
Count The 1's  On Byte Stream         & 0.223283, 0.165772                      & PASSED     \\ \hline
Count The 1's  On Specific Bytes      & min 0.022031,max 0.967922               & PASSED     \\ \hline
Parking Lot                           & 0.343578                                & PASSED     \\ \hline
Minimum Distance Test                 & 0.097679                                & PASSED     \\ \hline
3D Sphere Test                        & 0.885174                                & PASSED     \\ \hline
Squeeze Test                          & 0.805752                                & PASSED     \\ \hline  
Overlapping Sums Test                 & 0.531035                                & PASSED     \\ \hline
Runs Test Up                          & 0.297245, 0.766326                      & PASSED     \\ \hline
Runs Test Down                        & 0.315770, 0.857272                      & PASSED     \\ \hline
Craps Test                            & wins:0.360479,throws/game:0.207756      & PASSED     \\ \hline

\hline
\end{tabular}
\label{table:pr-diehard-results}
\end{table}

\begin{table}[!ht]
\caption{Results of the NIST Suite }
\centering
\begin{tabular}{ | l | l |  l |  }
\hline\hline
Test Name                             & Resulting p-value[s]       &  End Result  \\ [0.5ex]
\hline
Frequency                             & 0.379045                                & PASSED     \\ \hline
Block Frequency                       & 0.963497                                & PASSED     \\ \hline
Cumulative Sums (2 tests)             & 0.660844,0.895204                       & PASSED     \\ \hline
Runs                                  & 0.013760                                & PASSED     \\ \hline
Longest Run                           & 0.928150                                & PASSED     \\ \hline
Rank                                  & 0.007694                                & PASSED     \\ \hline
FFT                                   & 0.081013                                & PASSED     \\ \hline   
NonOverlappingTemplate (multiple)     & -                                       & PASSED     \\ \hline  
Universal                             & 0.678686                                & PASSED     \\ \hline    
Approximate Entropy                   & 0.048404                                & PASSED     \\ \hline
RandomExcursions (multiple)           & -                                       & PASSED     \\ \hline
RandomExcursionsVariant (multiple)    & -                                       & PASSED     \\ \hline
Serial (2 tests)                      & 0.019966, 0.175884                      & PASSED     \\ \hline
LinearComplexity                      & 0.679514                                & PASSED     \\ \hline

\hline
\end{tabular}
\label{table:pr-nist-results}
\end{table}

\end{section}

\begin{section}{Usage Advantages}\label{advantages}
The premise of the operation of the $cot^2$ generator is the theorem \ref{cot-sq-c}.
Due to the nature of the iteration (chaotic), the following properties are true :-

\begin{enumerate}

\item{\textbf{Hard to Predict.}  

The seed of the generator is $x_0$, which can be an arbitrary precision floating point value.
For the orbits of the iteration with $x_0 = 1.0000000$ and with $x_0 = 1.0000001$ 
the resulting byte stream start differing from the 3rd byte.
The choice of the default initial value $x_0$ as current systems time in nano second, 
makes it impossible to track and predict the value for any iteration,
unless absolutely precise time synchronisation is achieved between two systems.  

Due to machines architectural and implementation differences,
the same code produces two completely different byte streams in \emph{C++} and \emph{Java}.
Unless one knows the exact architecture of the machine and environment (language of implementation) on which the algorithm is running, 
it would be hard to predict the outcomes. 
}

\item{\textbf{ Conjectured to be Unique. }

For sufficient precision , 
the value of the $x_n$ will almost never repeat. The machine simulation shows this to be true for default precision, up to millions of iterations. 
Given the whole system can be made using arbitrary 
precision floating point arithmetic libraries of choice, then the values would be 
almost surely unique.
}

\item{\textbf{ Fast Algorithm. }

The java implementation for this random generator  
is faster than 
$$
java.security.SecureRandom
$$ 
implementation.
To generate a 600 mega byte of random bytes, using 6 bytes of buffer, 
current java implementation takes 150 seconds on a SUN Ultra-SPARC machine, 
running java 7 on SOLARIS,  while SecureRandom takes more than 230 Seconds.
}

\end{enumerate}

\end{section}

\begin{section}{Summary and Future Works}\label{summary}

In this paper we have demonstrated a computationally easy and fast method of 
generating very good random numbers for all practical purposes, including cryptography. 
Arbitrary large random chunk of bytes can be generated if we take an arbitrary precision Real Number implementation like BigDecimal in Java, albeit the
number generation would get slower. 
On the flip side, as the principle is well established, 
we can guarantee arbitrarily large period for the random number generator.
\end{section}

\appendix

\begin{section}{Definitions of The Theory Section.}\label{ap_1}
\begin{definition}\label{fp}
\textbf{Fixed Point of a function. }

For a function $f:X \to X$ , $x^*$ is said to be a fixed point, iff $f(x^*) = x^*$ .
\end{definition}

\begin{definition}\label{mp}
\textbf{Metric Space.}

A metric space is an ordered pair $(M,d)$ where $M$ is a set and $d$ is a metric on $M$ , i.e., a function:-
$$
d : M \times M \to \mathbb{R}
$$
such that for $x,y,z \in M$ , the following holds:-
\begin{enumerate}
\item { $d(x,y) \ge 0 $ }
\item { $ d(x,y) = 0 $ iff $x=y$ . }
\item { $d(x,y) = d(y,x) $ }
\item { $d(x,z) \le  d(x,y) + d (y,z)$ }
\end{enumerate}
The function `$d$' is also called ``distance function'' or simply ``distance''.
\end{definition}

\begin{definition}\label{orbit}
\textbf{Orbit.}

Let $f:X \to X$ be a function. 
The sequence $ \mathcal{O} = \{x_0, x_1,x_2,x_3,...\}$ where
$$
x_{n+1} = f(x_n) \; ; \; x_n \in X \; ; \; n \ge 0 
$$
is called an orbit (more precisely `forward orbit') of $f$. 
\end{definition}

\begin{definition}\label{period}
\textbf{Period of An Orbit.}

The length of the orbit (definition \ref{orbit} ) is called the period 
of the orbit. Hence, for an orbit $ \mathcal{O}$ the period `$p$' is:-
$$
p = | \mathcal{O} | 
$$ 
$f$ is said to have a `closed' or `periodic' orbit $ \mathcal{O}$ if $| \mathcal{O}| \ne \infty$
or the period is not infinity.
\end{definition}

\begin{theorem}\label{cc}
\textbf{Convergence Criterion for a Fixed Point Iteration (Banach Fixed Point). }

Iteration of  a function $f:X \to X$ $x_{n+1} = f(x_n)$ would converge to 
the fixed point $x^*$ iff:-
$$
|f(x) - f(x^*)| \le |x - x^*|
$$
where $x,x^* \in A \subseteq X $ where $A$ is called the basin of attraction. 
\end{theorem}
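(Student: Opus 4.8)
The plan is to run the classical contraction-mapping (Banach) argument, treating the stated inequality as a Lipschitz bound that must in fact be a strict contraction. First I would use that $x^*$ is a fixed point (definition \ref{fp}), so $f(x^*)=x^*$ and hence the iterate identity $x_{n+1}-x^* = f(x_n)-f(x^*)$ holds. Writing the error as $e_n = |x_n - x^*|$ and substituting this identity into the hypothesised bound gives
\[
e_{n+1} = |f(x_n)-f(x^*)| \le |x_n - x^*| = e_n,
\]
so the errors are non-increasing. To force $e_n \to 0$ --- which is what ``converge to $x^*$'' means in the metric $d$ of the underlying space (definition \ref{mp}) --- I would strengthen the hypothesis to a genuine contraction $|f(x)-f(x^*)| \le k\,|x-x^*|$ with $0 \le k < 1$ holding throughout the basin $A$. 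Iterating then yields $e_n \le k^n e_0$, and $k^n \to 0$ delivers convergence; this is the heart of the sufficiency direction.

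Second, to obtain the full Banach statement I would not merely assume the fixed point but construct it. Using completeness of $(X,d)$, I would apply the contraction to consecutive iterates to get $d(x_{n+1},x_n) \le k^n\, d(x_1,x_0)$, sum the resulting geometric tail to bound $d(x_{n+m},x_n)$, and conclude that $(x_n)$ is Cauchy. Completeness supplies a limit $x^*$; the Lipschitz bound makes $f$ continuous, so passing to the limit in $x_{n+1}=f(x_n)$ shows $f(x^*)=x^*$; and uniqueness follows because two distinct fixed points would contradict $k<1$. For the necessity (``only if'') direction I would argue contrapositively: if no such contraction bound holds on $A$, one can seed the iteration at a point whose error fails to decrease, blocking convergence.

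The hard part is that the inequality as written is non-strict, and the non-strict version does not by itself imply convergence: the identity map $f(x)=x$ and the reflection $f(x)=2x^*-x$ both satisfy $|f(x)-f(x^*)| = |x-x^*|$ yet drive no orbit toward $x^*$. The real obstacle is therefore extracting a \emph{uniform} factor $k<1$ rather than the bound $\le 1$ actually stated; without it the geometric estimate $e_n \le k^n e_0$ is unavailable and the theorem fails. I expect this gap, together with the completeness hypothesis on $X$ that the statement leaves implicit, to be the crux of making the claim rigorous.
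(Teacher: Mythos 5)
The paper never actually proves this statement: Theorem \ref{cc} sits in Appendix \ref{ap_1} as quoted background (a loose paraphrase of the Banach contraction principle), followed only by an informal remark about $|f'(x)| \ge 1$ near $x^*$, so there is no argument of the paper's to compare yours against. Your central diagnosis is correct and is the most important thing one can say here: with the non-strict bound $|f(x)-f(x^*)| \le |x-x^*|$ the claim is false as printed (the identity map and the reflection $x \mapsto 2x^* - x$ satisfy it with equality, yet no orbit moves toward $x^*$), and the repair you outline --- a uniform constant $k<1$ on $A$, completeness of the space, the geometric estimate $e_n \le k^n e_0$, the Cauchy argument, and uniqueness --- is exactly the textbook Banach proof, and that portion of your proposal is sound.

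However, your plan for the ``only if'' half has a genuine gap of its own: that direction is also false, and your contrapositive strategy cannot be completed. An orbit whose error fails to decrease at one step can still converge later, so ``seed the iteration at a point whose error fails to decrease'' does not block convergence. Concretely, take $X = \mathbb{R}$, $f(x) = 2x$ for $|x|<1$ and $f(x)=0$ for $|x| \ge 1$, with $x^* = 0$: every orbit lands exactly on $0$ after finitely many steps, so the iteration converges from every seed, yet $|f(x)-f(0)| = 2|x| > |x - 0|$ for all $0<|x|<1$, violating the stated bound throughout the basin. Hence no proof of the stated equivalence exists in either direction; what can be salvaged is precisely your strengthened sufficiency half. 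The same counterexample shows that the paper's follow-up remark (expansiveness near $x^*$ precludes convergence) also needs a caveat for orbits that hit $x^*$ exactly --- a distinction that matters for this paper, since its chaotic maps are exactly those whose orbits repeatedly re-enter expansive zones without settling.
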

That means, in short if $x^*$ be the fixed point, and $|f'(x)| \ge 1$ in 
the neighbourhood of the $x^*$ then the fixed point iteration 
won't converge at the fixed point $x^*$.  

\begin{definition}\label{top-space}
\textbf{Topological Space.}

Let the empty set be written as : $\emptyset$. Let $2^X$ denotes the power set, i.e. the set of all subsets of $X$.
A topological space is a set $X$ together with $\tau \subseteq 2^X$ satisfying the following axioms:-
\begin{enumerate}
\item{ $\emptyset \in \tau$ and $X \in \tau$ ,}
\item{ $\tau$ is closed under arbitrary union, }
\item{ $\tau$ is closed under finite intersection. }
\end{enumerate}
The set $\tau$ is called a topology of $X$.
\end{definition} 

\begin{definition}\label{dense-set}
\textbf{Dense Set.}

Let $A$ be a subset of a topological space $X$. 
$A$ is dense in $X$ for any point $x \in X$, if any neighborhood of $x$ contains at least one point from $A$.
\end{definition}

The real numbers $\mathbb{R}$ with the usual topology have the rational numbers $\mathbb{Q}$ as a countable dense subset.

\begin{definition}\label{top-trans}
\textbf{Topological Transitivity}

A function $f:X \to X$ is topologically transitive ,if, 
given any every pair of non empty open sets $A,B \subset X$ , 
there is some positive integer $n$ such that 
$$
f^n(A) \cap B \ne \emptyset .
$$ 
where $f^n$ means n'th iterate of $f$ .
\end{definition}
\end{section}

\end{document}